\documentclass[envcountsect,envcountsame,runningheads,pdftex,a4paper]{llncs}

\usepackage{FM2019}

\allowdisplaybreaks

\begin{document}

\title{Mechanised Assurance Cases with \\ Integrated Formal Methods in Isabelle}
\titlerunning{Assurance Cases in Isabelle}

\author{Yakoub Nemouchi \and Simon Foster \and Mario Gleirscher \and Tim Kelly}
\institute{University of York \\ \email{firstname.lastname@york.ac.uk}}

\authorrunning{Yakoub Nemouchi \and Simon Foster \and Mario Gleirscher \and Tim Kelly}

\maketitle

\begin{abstract}
  Assurance cases are often required as a means to certify a critical system. Use of formal methods in assurance can
  improve automation, and overcome problems with ambiguity, faulty reasoning, and inadequate evidentiary
  support. However, assurance cases can rarely be fully formalised, as the use of formal methods is contingent on models
  validated by informal processes. Consequently, we need assurance techniques that support both formal and informal
  artifacts, with explicated inferential links and assumptions that can be checked by evaluation. Our contribution is a
  mechanical framework for developing assurance cases with integrated formal methods based in the Isabelle system. We
  demonstrate an embedding of the Structured Assurance Case Meta-model (SACM) using Isabelle/DOF, and show how this can
  be linked to formal analysis techniques originating from our verification framework, Isabelle/UTP. We validate our
  approach by mechanising a fragment of the Tokeneer security case, with evidence supplied by formal verification.
\end{abstract}


\section{Introduction}
\label{sec:intro}
Cyber-physical systems (CPS) control critical socio-technical processes prone to faults and other critical events with
potentially undesired consequences. Such systems include autonomous vehicles, traffic flow control, patient monitoring,
surgical robot assistants, and building security automation.  Real-time concurrency of physical events and computation
poses tough challenges in achieving high levels of assurance in verification and validation. Consequently, the benefits
of CPS can only be harnessed if they acquire consumer trust and regulatory acceptance.

Safety cases~\cite{Kelly1998,Hawkins2015}, and more generally assurance cases, are structured arguments, supported by
evidence, intended to convice a regulator that a system is acceptably safe for application in a specific operating
environment~\cite{Habli2010}. They are recommended by several international standards, such as ISO26262 for automotive
applications. An assurance case consists of a hierarchical decomposition of requirements, through appropriate
argumentation strategies, into further claims, and eventually supporting evidence. Several languages exist for
expressing assurance cases, including the Goal Structuring Notation~\cite{Kelly1998} (GSN), and the closely related
Structured Assurance Case Metamodel\footnote{\textit{Structured Assurance Case Metamodel}:
  \url{http://www.omg.org/spec/SACM/}} (SACM).

Assurance case creation can be supported by model-based design, which utilises architectural and behavioural models over
which requirements can be formulated~\cite{Habli2010}. However, safety cases can suffer from undermining logical
fallacies and lack of evidence~\cite{Greenwell2006}. A proposed solution is formalisation in a machine-checked logic to
enable verification of consistency and well-foundedness~\cite{Rushby2013}. As confirmed by avionics standard DO-178C
supplement DO-333, the evidence gathering process can also benefit from the rigour of formal methods. At the same time,
we acknowledge that, (1) assurance cases are intended primarily for human consumption, and (2) that formal models must
be validated informally~\cite{Habli2014}. Consequently, assurance cases will usually combine informal and formal
content, and any tool must support this.

Our vision is a unified framework for machine-checked assurance cases, and with evidence provided by a number of
integrated formal methods~\cite{Gleirscher2018-NewOpportunitiesIntegrated}. Such a framework can have a transformative
effect in the field of assurance by harnessing results from automated formal verification to produce assurance cases
undergirded by specific mathematical guarantees of their consistency and adequacy of the evidence. Moreover, it can
provide a potential route to regulatory acceptance, through the production of mathematically verified safety
certificates.

The contributions of this paper make a first step in this direction: (1) an implementation of SACM in the Isabelle
interactive theorem prover~\cite{Nipkow-Paulson-Wenzel:2002}, (2) a machine-checked domain-specific assurance language,
and (3) integration of formal evidence from our verification framework,
Isabelle/UTP~\cite{Foster19a-IsabelleUTP}. Isabelle provides a sophisticated executable document model for presenting a
graph of hyperlinked artifacts, like definitions, theorems, and proofs. The document model provides automatic and
incremental consistency checking, and change analysis, where updates to model artifacts trigger rechecking. Such
capabilities can support efficient maintenance and evolution of model-based assurance cases~\cite{Hawkins2015}.

Moreover, the document model allows management of both informal and formal content, and access to a vast array of
automated verification tools~\cite{Wenzel2007}. In particular, our own verification framework,
Isabelle/UTP~\cite{Foster19a-IsabelleUTP,Foster16a}, harnesses Hoare and He's Unifying Theories of
Programming~\cite{Hoare&98} (UTP) to provide verification facilities for a variety of programming and modelling
languages with paradigms as diverse as concurrency, real-time, and hybrid computation. We validate our approach by
mechanising an assurance case for the Tokeneer system~\cite{Barnes2006-EngineeringTokeneerenclave}, including the
underlying formal model and verification of security functional requirements\footnote{Supporting materials, including
  Isabelle theories, can be found on \href{https://www.cs.york.ac.uk/~simonf/FM2019/}{our website}}.

In \S\ref{sec:prelim} we outline preliminary materials: SACM, Isabelle, and the Isabelle/DOF ontology framework. In
\S\ref{sec:tokeneer} we describe the Tokeneer system, and how it is assured and verified. In \S\ref{subsec:isacm} we
begin our contributions by describing the implementation of \isacm and our assurance DSL in Isabelle. In
\S\ref{sec:model} we describe how we model and verify Tokeneer using our verification framework, Isabelle/UTP. In
\S\ref{sec:tokassure} we describe the mechanisation of the assurance case for Tokeneer in \isacm. In
\S\ref{subsec:relatedWork} we highlight related work, and in \S\ref{sec:conclusion} we conclude.
\section{Preliminaries}
\label{sec:prelim}
\begin{wrapfigure}{r}{0.57\linewidth}
  \vspace{-5.5ex}

  \centering
  \includegraphics[width=7cm]{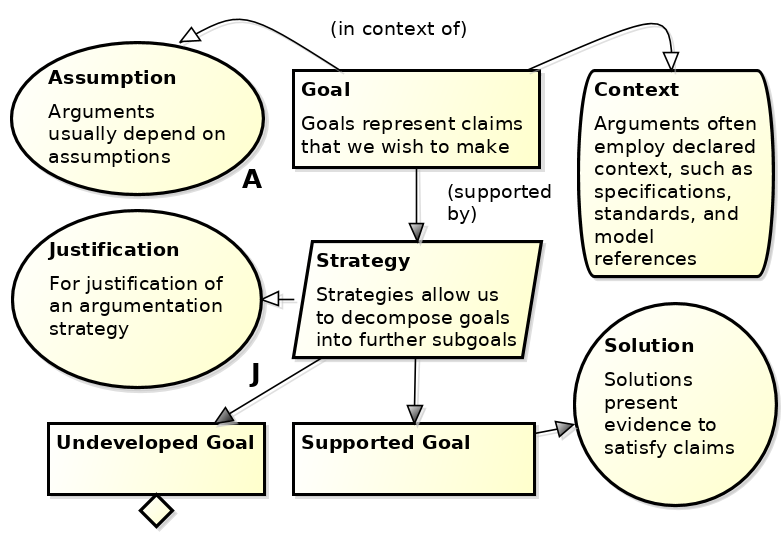}

  \vspace{-2.5ex}
  
  \caption{Goal Structuring Notation}
  
  \label{fig:gsn}

  \vspace{-3.5ex}

\end{wrapfigure}

\textbf{SACM.} Assurance cases are often presented using a notation like GSN~\cite{Kelly1998} (Figure~\ref{fig:gsn}), that shows the
claims that are made, the argumentation strategies, the contextual elements, assumptions, justifications, and eventually
evidence. SACM is an OMG standard meta-model for assurance cases~\cite{Hawkins2015,Selviandro2018-SACM}. It aims at
unifying and refining a variety of predecessor notations, including GSN~\cite{Kelly1998} and CAE (Claims, Arguments, and
Evidence), and is intended to be a definitive reference model.

SACM has three crucial concepts: arguments, artifacts, and terminology. An argument consists of a collection of claims,
evidence citations, and inferential links between them. Artifacts manifest evidence, such as models, techniques,
results, verification activities, and participants. Terminology is used to fix formal terms for the use in
claims. Normally, claims are textual, but in SACM they can also contain structured expressions, which allows integration
of formal languages.

\begin{figure}[t]

  \vspace{-1ex}

  \centering\includegraphics[width=.8\linewidth]{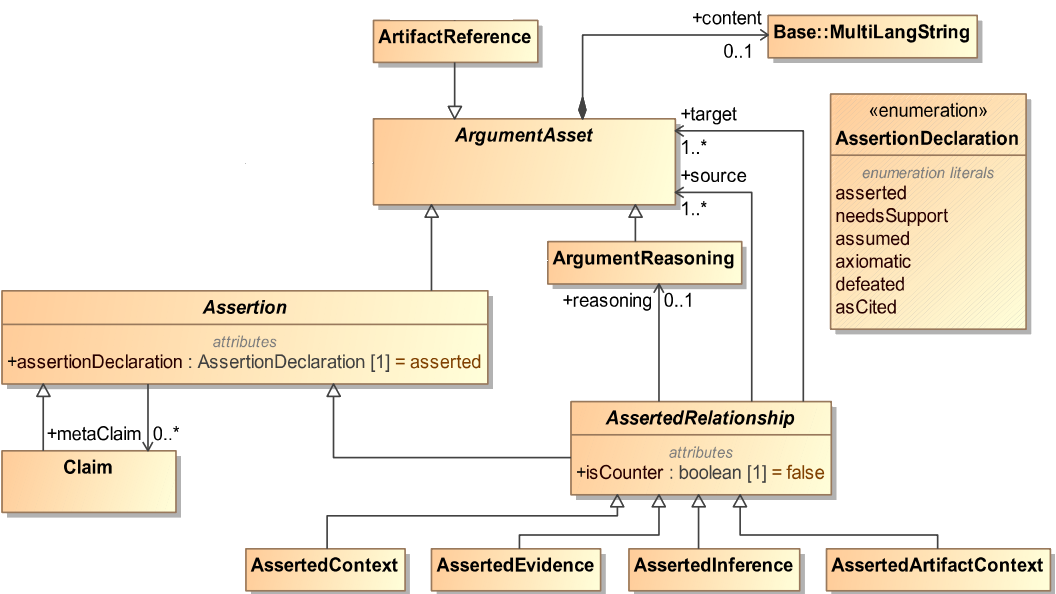}

  \caption{SACM Argumentation Meta-Model}
  \label{fig:sacm-arg}

  \vspace{-4ex}

\end{figure}

The argumentation meta-model is shown in Figure~\ref{fig:sacm-arg}. The base class is \textsf{ArgumentAsset}, which
groups the argument assets, such as \textsf{Claim}s, \textsf{ArtifactReference}s, and \textsf{AssertedRelationship}s
(which are inferential links). Every asset may contain a \textsf{MultiLangString} that provides a description,
potentially in multiple natural and formal languages, and corresponds to contents of the shapes in Figure~\ref{fig:gsn}.

\textsf{AssertedRelationship}s represent a relationship that exists between several assets. They can be of type
\textsf{AssertedContext}, which uses an artifact to define context; \textsf{AssertedEvidence}, which evidences a claim;
\textsf{AssertedInference} which describes explicit reasoning from premises to conclusion(s); or
\textsf{AssertedArtifactSupport} which documents an inferential dependency between the claims of two artifacts.

Both \textsf{Claim}s and \textsf{AssertedRelationship}s inherit from \textsf{Assertion}, because in SACM both claims and
inferential links are subject to argumentation and refutation. SACM allows six different classes of assertion, via the
attribute \textsf{assertionDeclaration}, including \textsf{axiomatic} (needing no further support), \textsf{assumed},
and \textsf{defeated}, where a claim is refuted. An \textsf{AssertedRelationship} can also be flagged as
\textsf{isCounter}, where counterevidence for a claim is presented.

\vspace{1ex}

\noindent\textbf{Isabelle.} \ihol is an interactive theorem prover for
\ac{hol}~\cite{Nipkow-Paulson-Wenzel:2002}, based on the generic framework \iisar~\cite{Wenzel02isabelle/isar}. The
former provides a functional specification language, and a large array of facilities for proof and automated
verification. The latter has an interactive, extensible, and executable document model, which describes Isabelle
theories. An Isabelle theory contains a sequence of executable markup commands with a semantics given in the
meta-language SML.

\begin{wrapfigure}{r}{0.4\linewidth}
  \vspace{-5ex}

  \centering\includegraphics[scale=0.35]{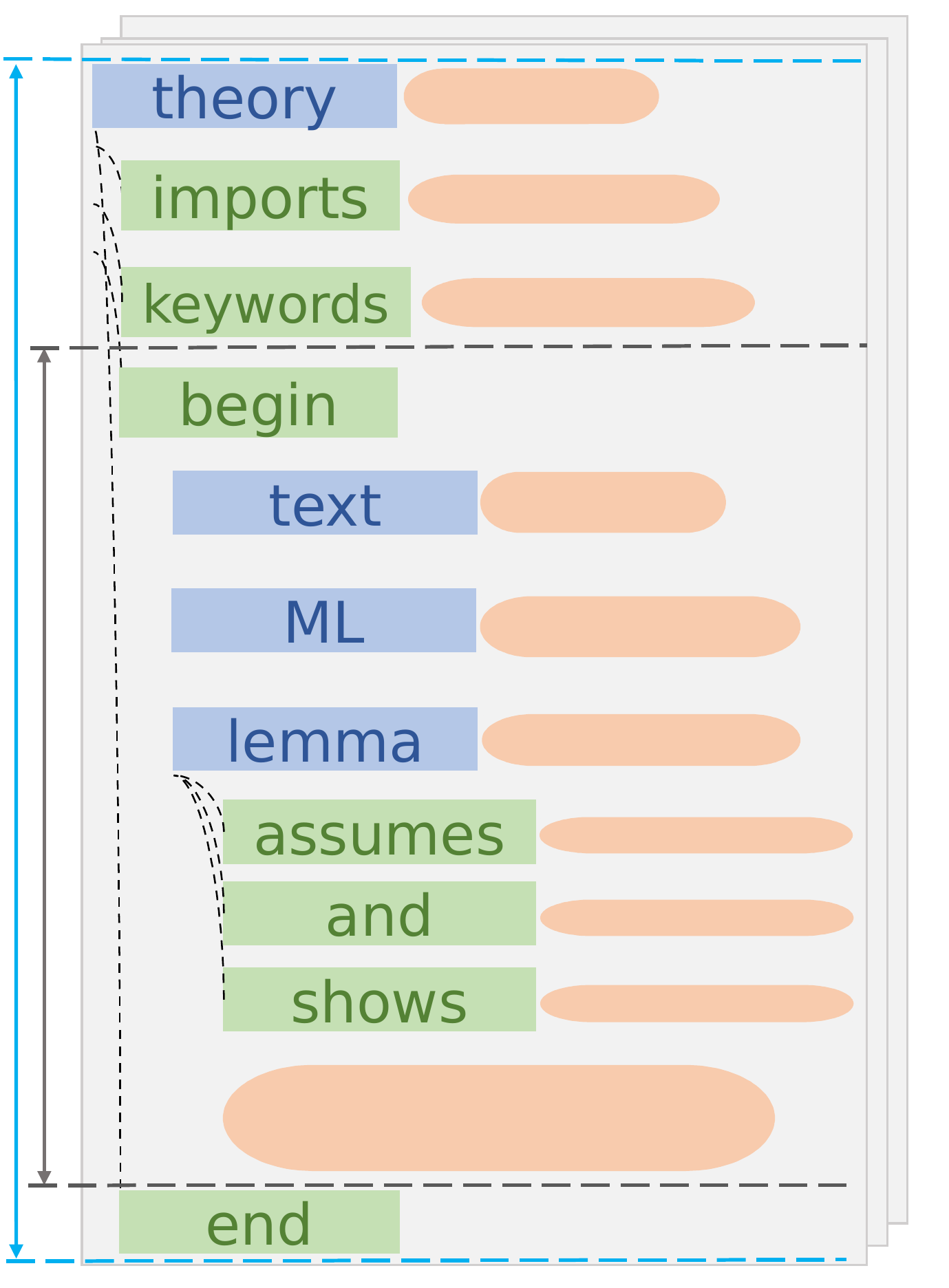}

  \vspace{-1ex}

  \caption{Document Model}\label{fig:doc-model}

  \vspace{-3ex}
\end{wrapfigure}

\autoref{fig:doc-model} gives an overview of the document model. The first section for \emph{context definition}
describes \emph{imports} of existing theories, and \emph{keywords} which give extensions to the concrete syntax. The
second section is the body enclosed between \emph{begin}-\emph{end} which is a sequence of commands. Isabelle commands
have a concrete syntax consisting of pre-declared top-level keywords (in \textcolor{Blue}{blue}), such as the command
\textcolor{Blue}{\inlineisar+ML+}, followed by a ``semantics area'' enclosed between
\inlineisar+\<open>...\<close>+. The keywords can be associated with optional attribute keywords (in
\textcolor{OliveGreen}{green}). The processing of the concrete syntax and any extensions is performed by SML code.

An Isabelle session is as an acyclic graph grouping several theories and their dependencies. When an edit is made to a
document in a session, it is immediately processed and executed, with feedback provided to the user. For example,
whenever the dependency structure of a document changes due to the removal, addition, or alteration of artifacts,
Isabelle reruns the associated code and any dependencies. This feature makes Isabelle ideal for assurance cases, which
have to be updated with every increment in system development

In addition to formal content, Isabelle theories can also contain informal commentary.  The
\textcolor{Blue}{\inlineisar+text+} \inlineisar+\<open>...\<close>+ command is a processor for textual markup content
containing a mixture of informal content, and links to formal document entities through \emph{antiquotations} of the
form \inlineisar+@{aqname ...}+. Antiquotations trigger a series of checks, for example the antiquotation
\inlineisar+@{thm+ \inlineisar+\<open>+\inlineisar+HOL.refl+\inlineisar+\<close>}+ checks if the theorem
\inlineisar+HOL.refl+ exists within the underlying theory context, and if so inserts a hyperlink.

Plugins, such as \ihol, HOL-TestGen~\cite{DBLP:journals/fac/BruckerW13}, and \idof~\cite{DBLP:conf/mkm/BruckerACW18}
contain document models and conservative extensions, following the LCF approach. \idof~\cite{DBLP:conf/mkm/BruckerACW18}
is a plugin that the \iisar document model implemented with support for ontologies. The result is a machine-checked
document model with formal hyperlinks between document instances of the modelled ontology.

The central component of \idof is the \ac{iosl}, which describes the content of documents in terms of several document
classes. Document classes can be linked to form a class model. We refer to~\cite{DBLP:conf/mkm/BruckerACW18} for
examples to model document content within \idof. A document class is the main entity in \ac{iosl} and it is represented
using the command \textcolor{Blue}{\inlineisar+doc_class+}, which creates a new class with a number of typed
attributes. The attributes of \textcolor{Blue}{\inlineisar+doc_class+} can refer both to the standard \ac{hol} types
such as \inlineisar+string+, \inlineisar+bool+, and also internal Isabelle meta-types such as \inlineisar+thm+,
\inlineisar+term+, or \inlineisar+typ+, which represent theorems, logical terms, and types, respectively. This is
because DOF ontologies sit a the meta-logical level, and so they can freely mix formal and informal content. This is our
motivation for its use in mechanising SACM.

\section{Running Example: Tokeneer}
\label{sec:tokeneer}
To demonstrate our approach, we use the \ac{tis}\footnote{Project
  website: \url{https://www.adacore.com/tokeneer}} illustrated in
Figure~\ref{fig:tis}, a system that guards entry to a secure
enclave by ensuring that only authorised users are admitted.

\begin{wrapfigure}{r}{0.5\linewidth}
  \vspace{-4.5ex}

  \centering
  \includegraphics[width=6cm]{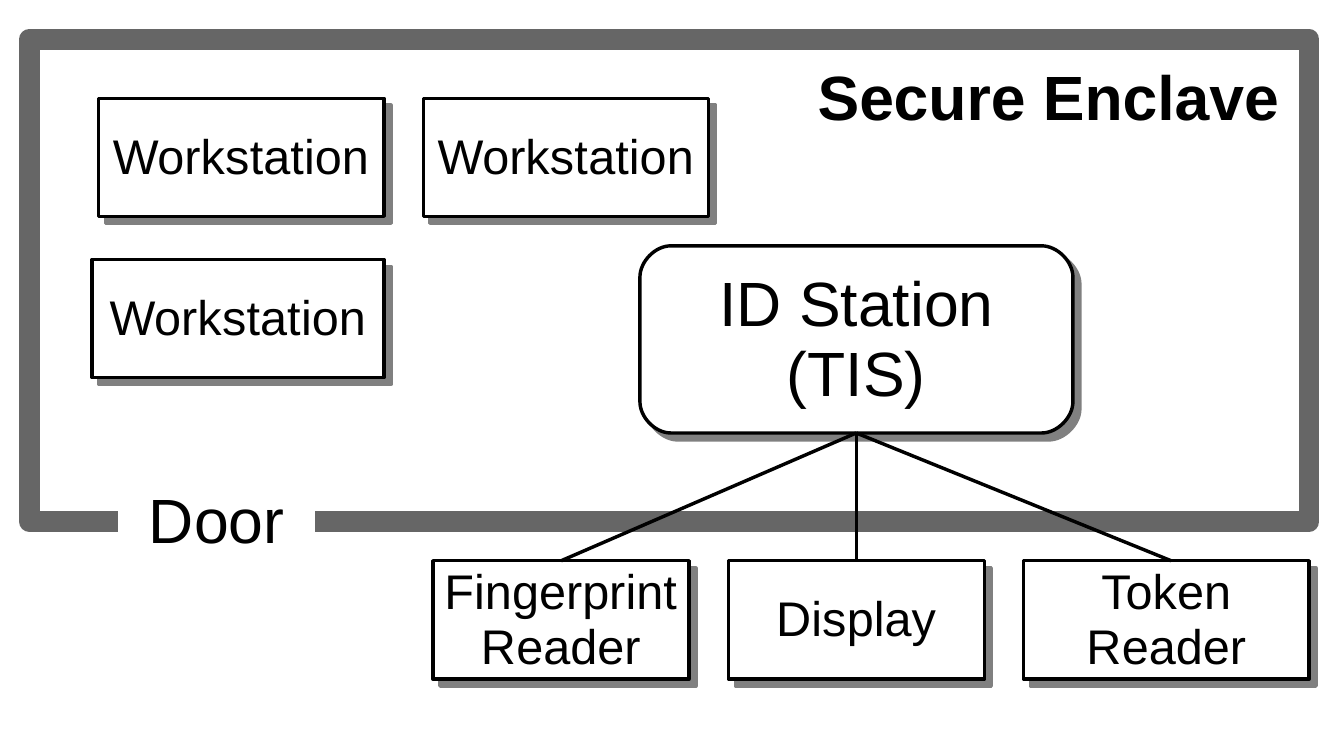}

  \vspace{-2.5ex}
  
  \caption{Tokeneer System Overview}
  
  \label{fig:tis}

  \vspace{-4ex}
\end{wrapfigure}
The relevant physical infrastructure consists of a door, a fingerprint reader, a display, and a card (token) reader. The
main function of the TIS is to check the stored credentials on a presented token, read a fingerprint if necessary, and
then either unlatches the door, or denies entry. Entry is permitted when the token holds at least three data items: (1)
an ID certificate, which identifies the user, (2) a privilege certificate, which stores a clearance level, and (3) an
identification and authentication (I\&A) certificate, which assigns a fingerprint template. When the user first presents
their token the three certificates are read and cross-checked. If the token is valid, then a fingerprint is taken, which
if validated against the I\&A certificate, allows the door to be unlocked once the token is removed. An optional
authorisation certificate is also written upon successful authentication which allows the fingerprint check to be
skipped.

The TIS has a variety of other functions related to its
administration. Before use, a TIS must be enrolled, meaning it is
loaded with a public key chain and certificate, which are needed to
check token certificates. Moreover, the TIS stores audit data which
can be used to check previously occurred entries. The TIS therefore
also has a keyboard, floppy drive, and screen to configure
it. Administrators are granted access to these functions. The TIS also
has an alarm which will sound if the door is left open for too long.

Our objective is to construct an assurance case that argues that the \ac{tis} fulfils its security properties and
complies to the \ac{cc} standard~\cite{CC2017-CommonCriteriaInformationPt1}. \ac{cc} supports a vendor in delivering a
system compliant to a \emph{security level} while a certification authority confirms compliance and further qualities.
The standard defines \emph{seven} \acp{eal}, each a collection of \acp{sfr} and \acp{sar} the system would have to meet.
\acp{fm} are strongly recommended for \ac{eal} 5 and above.
Now, one can either (a) use a pre-specified \emph{set of general security
  properties},
or (b) develop an \emph{application-specific set} with the potential of
additional effort due to requirements analysis.
The security of the \ac{tis} is assured according to (b) by demonstrating six \acp{sfr}~\cite{TIS-SecurityProperties},
of which the first four are shown here and detailed below in~\autoref{sec:model}:
{\small
\label{props:TIS}
\begin{description}
\item[SFR1] \label{prop:SFR1} If the latch is unlocked by TIS, then
  TIS must be in possession of either a User Token or an Admin
  Token. The User Token must either have a valid Authorisation
  Certificate, or must have valid ID, Privilege, and I\&A
  Certificates, together with a template that allowed TIS to
  successfully validate the user's fingerprint. Or, if the User
  Token does not meet this, the Admin Token must have a valid
  Authorisation Certificate, with role of ``guard''.
\item[SFR2] If the latch is unlocked automatically by TIS, then the current time must be close to being within the
  allowed entry period defined for the User requesting access.
\item[SFR3] An alarm will be raised whenever the door/latch is
  insecure.
\item[SFR4] No audit data is lost without an audit alarm being
  raised.
\end{description}}
The \emph{pioneering work} on the assurance of the \ac{tis} according
to option~(b) was carried out by Praxis High Integrity Systems and
SPRE~Inc.~\cite{Barnes2006-EngineeringTokeneerenclave}.  Barnes et
al.~performed security analysis, specification using Z, implementation
in SPARK, and verification and test of the security properties.  After
independent assessment \ac{eal} 5 was achieved.  This way, Tokeneer
became a successful example of the use of \acp{fm} in assuring a
system against \ac{cc}.

\section{\isacm}
\label{subsec:isacm}
In the following we encode SACM in \idof as an ontology, and then use it to provide a concrete syntax for our assurance
case language. Our embedding implements assurance cases as meta-logical entities. We are not embedding assurance
arguments in the HOL logic, as this would prevent the expression of informal reasoning and explanation. Rather, SACM is
implemented as a datatype in SML, meaning that we can refer to entities like types, terms, and
theorems as objects. Thus, certain claims can contain formal expressions, but others may have unstructured natural
language. Thus, we faithfully represent the inherently semi-formal nature of assurance cases.

We focus on the \inlineisar+ArgumentationPackage+\footnote{We model all parts of argumentation, base, artifact and
  terminology packages in \idof, but omit details about these for space reasons.} from Figure~\ref{fig:sacm-arg}, as this is
most relevant for the TIS argument we develop in \S\ref{sec:tokassure}. Different types of evidences and context,
modelled by the \inlineisar+ArtifactPackage+, can support a claim. The class \inlineisar+ArgumentAsset+ is represented
in \idof as follows:
\begin{isar}[numbers=none, backgroundcolor=\color{black!10}, frame=lines]
(*@\textcolor{Blue}{doc\_class}@*) ArgumentAsset = ArgumentationElement +
  content_assoc:: MultiLangString 
\end{isar}
Here, \inlineisar+ArgumentationElement+ is a base class which \inlineisar+ArgumentAsset+ inherits from, but is not
discussed further. The \inlineisar+content_assoc::+ \inlineisar+MultiLangString+ is an attribute modelling an
association between \inlineisar+ArgumentAsset+ and \inlineisar+MultiLangString+ from the \inlineisar+BasePackage+. It
allows classes inherited from \inlineisar+ArgumentAsset+ to include content expressed in multiple languages, and also
structured expressions, using the \inlineisar+TerminologyPackage+. Our implementation of \inlineisar+MultiLangString+
allows us to embed a variety of informal and formal content utilising the Isabelle term language.

The class \inlineisar+ArgumentAsset+ is inherited by three classes: (1) \inlineisar+Assertion+, which is a unified type for
claims and their relationships; (2) \inlineisar+ArgumentReasoning+, which is used to explicate the argumentation
strategy being employed; and (3) \inlineisar+ArtifactReference+, that evidences a claim with an artifact.

In \idof, \inlineisar+ArgumentAsset+ is inherited as follows:
\begin{isar}[numbers=none, backgroundcolor=\color{black!10}, frame=lines]
(*@\textcolor{Blue}{datatype}@*) assertionDeclarations_t = 
  Asserted|Axiomatic|Defeated|Assumed|NeedsSupport
(*@\textcolor{Blue}{doc\_class}@*) Assertion = ArgumentAsset +  
  assertionDeclaration::assertionDeclarations_t          
(*@\textcolor{Blue}{doc\_class}@*) ArgumentReasoning  = ArgumentAsset + 
  structure_assoc::"ArgumentPackage option"
(*@\textcolor{Blue}{doc\_class}@*) ArtifactReference = ArgumentAsset +
  referencedArtifactElement_assoc::"ArtifactElement set"
\end{isar}
Here, \inlineisar+assertionDeclarations_t+ is an \ihol enumeration type, \inlineisar+set+ is the set type, and
\inlineisar+option+ is the optional type. The attribute \inlineisar+assertionDeclaration+ is of type
\inlineisar+assertionDeclarations_t+, which specifies the status of assertions. The attribute
\inlineisar+structure_assoc+ is an association to the class \inlineisar+ArgumentPackage+, which is not discussed
here. Finally, the attribute \inlineisar+referencedArtifactElement_assoc+ is an association to the
\inlineisar+ArtifactPackage+ allowing claims to reference artifacts.

The class \inlineisar+Claim+ is a leaf child class and inherits from the class
\inlineisar+Assertion+.  This means that an instance of \inlineisar+Claim+ has a \inlineisar+gid+, a
\inlineisar+MultiLangString+ description, and can be \inlineisar+Axiomatic+, \inlineisar+Asserted+, etc. The other child
class for \inlineisar+Assertion+ is:
\begin{isar}[numbers=none, backgroundcolor=\color{black!10}, frame=lines]
(*@\textcolor{Blue}{doc\_class}@*) AssertedRelationship = Assertion +
  isCounter::bool
  reasoning_assoc:: "ArgumentReasoning option"
\end{isar}
Here, \inlineisar+isCounter+ specifies whether the target of the relation is refuted by the source, and
\inlineisar+reasoning_assoc+ is an association to \inlineisar+ArgumentReasoning+, to elaborate the
strategy. \inlineisar+AssertedRelationship+ models the relationships between elements of type
\inlineisar+ArgumentAsset+. In addition to the inherited attributes from parent classes, the relationship classes have
the attributes \inlineisar+source+ and \inlineisar+target+. The source attribute carries the supporting elements and the
target carries the supported elements.

From the SACM ontology, we create a number of Isabelle commands that create elements of the meta-model.  Our approach
gives a concrete syntax for SACM in terms of Isabelle commands as follows. Instances of concrete leaf child classes in
the metamodel have concrete syntax consisting of an Isabelle top-level command. Instances of attributes of a leaf child
class, including the inherited ones, have a concrete syntax represented by an Isabelle (\textcolor{OliveGreen}{green})
subcommand. Instances of associations between leaf child classes have a concrete syntax represented by an Isabelle
subcommand. The command has the name of the represented association and has an input with the type of the association of
the underlying instance. A selection of the commands for SACM is shown below.

\begin{isar}[numbers=none, backgroundcolor=\color{black!10}, frame=lines]
(*@\textcolor{Blue}{CLAIM}@*) <gid> (*@\textcolor{OliveGreen}{CONTENT}@*) <MultiLangString>
(*@\textcolor{Blue}{ASSERTED\_INFERENCE}@*) <gid> (*@\textcolor{OliveGreen}{SOURCE}@*) <gid>* (*@\textcolor{OliveGreen}{TARGET}@*) <gid>*
(*@\textcolor{Blue}{ASSERTED\_CONTEXT}@*) <gid> (*@\textcolor{OliveGreen}{SOURCE}@*) <gid>* (*@\textcolor{OliveGreen}{TARGET}@*) <gid>*
(*@\textcolor{Blue}{ASSERTED\_EVIDENCE}@*) <gid> (*@\textcolor{OliveGreen}{SOURCE}@*) <gid>* (*@\textcolor{OliveGreen}{TARGET}@*) <gid>*
(*@\textcolor{Blue}{ARTIFACT}@*) <gid> (*@\textcolor{OliveGreen}{VERSION}@*) <text> (*@\textcolor{OliveGreen}{DATE}@*) <text> (*@\textcolor{OliveGreen}{CONTENT}@*) <MultiLangString>
\end{isar}

\noindent\textcolor{Blue}{\inlineisar+CLAIM+} creates a new claim with an identifier (\inlineisar+gid+), and content described by
a \inlineisar+MultiLangString+. \textcolor{Blue}{\inlineisar+ASSERTED\_INFERENCE+} creates an inference between several
claims. Is has subcommands \textcolor{OliveGreen}{\inlineisar+SOURCE+} and \textcolor{OliveGreen}{\inlineisar+TARGET+}
that are both lists of elements. The command ensures that the cited claims exist, otherwise an error message is
issued. \textcolor{Blue}{\inlineisar+ASSERTED\_CONTEXT+} similarly asserts that an entity should be treated as context
for another, and \textcolor{Blue}{\inlineisar+ASSERTED\_EVIDENCE+} associates evidence with a claim. The
\textcolor{Blue}{\inlineisar+ARTIFACT+} command creates an evidential artifact, with description, date, and content.

\begin{wrapfigure}{r}{0.5\linewidth}
  \vspace{-5.5ex}

  \centering
  \includegraphics[width=6cm]{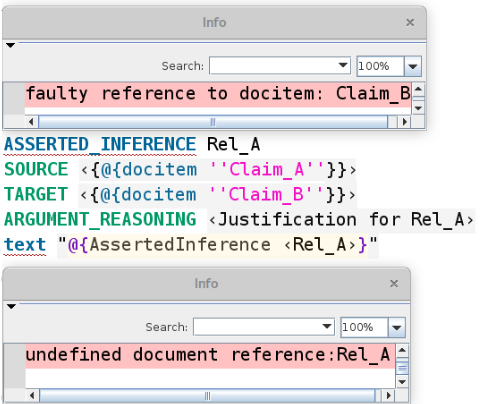}

  \vspace{-2ex}
  
  \caption{Relations in \isacm}
  
  \label{fig:ClaimA_DSL}

  \vspace{-4ex}
\end{wrapfigure}

Each command also has an associated antiquotation, which can be used to reference the entity type in a claim
string. This is illustrated in Figure~\ref{fig:ClaimA_DSL}, which shows the interactive nature of the assurance case
language. It represents an inferential link between a strategy and a justification (cf. Figure~\ref{fig:gsn}). An
asserted inference called \inlineisar+Rel_A+ has been created that attempts to link existing claims \inlineisar+Claim_A+
and \inlineisar+Claim_B+. However, \inlineisar+Claim_B+ does not exist, and so the error message at the top of the
screenshot is issued. A textual element is then created which references \inlineisar+Rel_A+ using the antiquotation
class \inlineisar+@{AssertedInference ...}+. This also leads to an error, shown at the bottom, since \inlineisar+Rel_A+
does not exist.

We have now developed our interactive assurance case tool. In the next section we begin to consider assurance of the
Tokeener system, first considering formal verification of the security properties.

\section{Modelling and Verification of Tokeneer}
\label{sec:model}
In this section we formally model the TIS in Isabelle/UTP~\cite{Foster19a-IsabelleUTP}, in order to provide evidence for
the assurance case. In~\cite{TIS-SecurityProperties}, the six SFRs are argued semi-formally, but here we provide a
formal proof. We focus on the formalisation and verification of the user entry part of SFR1, and describe the elements
necessary for this.

The TIS behaviour, formalised by Praxis in the Z notation~\cite{TIS-FormalSpec}, uses an elaborate state space and a
collection of relational operations. The state is bipartite, consisting of (1) the digital state of the TIS and (2) the
variables shared with the real world, which are monitored or controlled, respectively. The TIS monitors the time,
enclave door, fingerprint reader, token reader, and several peripherals. It controls the door latch, an alarm, a
display, and a screen.

The specification describes a state transition system, illustrated in Figure~\ref{fig:tis-states}
(cf. \cite[page~43]{TIS-FormalSpec}), where each transition corresponds to an operation. Several operations are ommited
due to space constraints. Following enrolment, the TIS becomes \textsf{quiescent} (awaiting
interaction). \textsf{ReadUserToken} triggers if the token is presented, and reads its contents. Assuming a valid token,
the TIS determines whether a fingerprint is necessary, and then triggers either \textsf{BioCheckRequired} or
\textsf{BioCheckNotRequired}. If required, the TIS then reads a fingerprint (\textsf{ReadFingerOK}), validates it
(\textsf{ValidateFingerOK}), and finally writes an authorisation certificate to the token
(\textsf{WriteUserTokenOK}). If the access credentials are available (\textsf{waitingEntry}), then a final check is
performed (\textsf{EntryOK}), and once the user removes their token (\textsf{waitingRemoveTokenSuccess}), the door is
unlocked (\textsf{UnlockDoor}).

\begin{figure}[t]
  \centering\includegraphics[width=\linewidth]{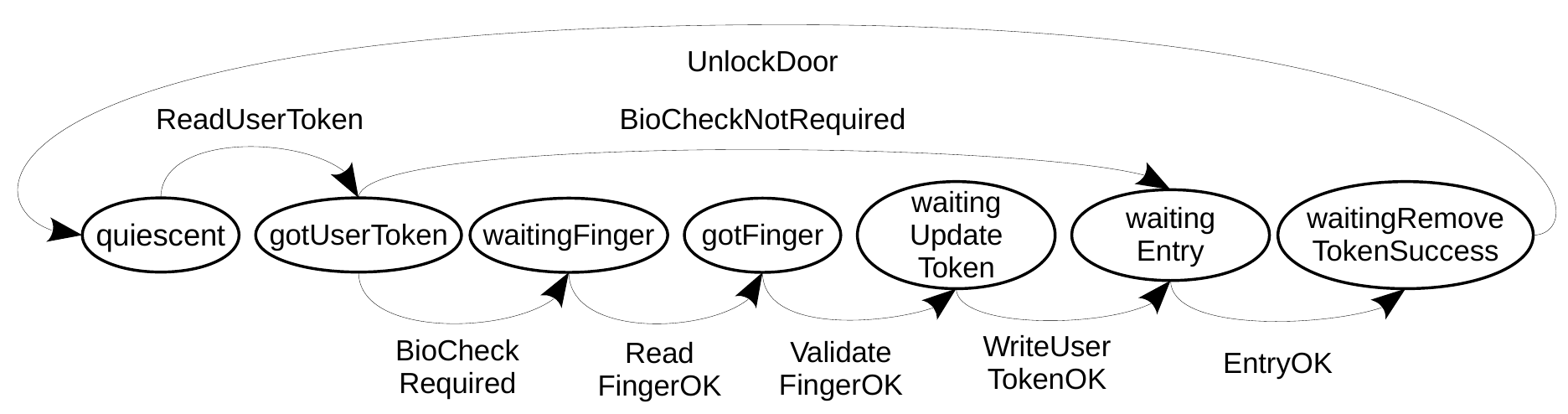}

  \vspace{-2ex}

  \caption{TIS Main States}
  \label{fig:tis-states}

  \vspace{-4ex}
\end{figure}

We mechanise the TIS model by first creating hierarchical state space types, with invariants adapted from the Z
specification~\cite{TIS-FormalSpec}. We define the operations using Dijkstra's guarded command
language~\cite{Dijkstra75} (GCL) rather than the Z schemas directly, as GCL is easier to reason about and provides
similar expressivity. Moreover, GCL is given a denotational semantics in UTP's alphabetised relational
calculus~\cite{Hoare&98}, and so it is possible to prove equivalence with the corresponding Z operations. We use a
variant of GCL that broadly follows the following syntax:
$$\prog ::= \ckey{skip} | \ckey{abort} | \prog \relsemi \prog | \expr \longrightarrow \prog | \prog \intchoice \prog | \var := \expr | \uframe{\var}{\prog} $$
Here, $\prog$ is a program, $\expr$ is an expression or predicate, and $\var$ is a variable. The language provides the
usual syntax for sequential composition, guarded commands, non-deterministic choice, and assignment. We also adopt a
framing operator $\uframe{a}{P}$ which states that $P$ can refer only to variables in the namespace $a$, and all other
variables remain unchanged~\cite{Foster16a,Foster19a-IsabelleUTP}.

We now introduce the TIS state space, on which the state machine will act.
\begin{align*}
IDStation &\defs \left[
\begin{array}{l}
  currentUserToken : TOKENTRY, currentTime : TIME, \\ 
  userTokenPresence : PRESENCE, status : STATUS, \\ 
  issuerKey : \textit{USER} \pfun \textit{KEYPART}, \cdots
\end{array}
\right] \\[.5ex]
Controlled &\defs \left[
\begin{array}{l}
  latch : LATCH, alarm : ALARM, \cdots
\end{array}
\right] \\[.5ex]
Monitored &\defs \left[
\begin{array}{l}
  now : TIME, finger : \textit{FINGERPRINTTRY}, \\ 
  userToken : TOKENTRY, \cdots \\
\end{array} 
\right] \\[.5ex]
RealWorld &\defs \left[ mon : Monitored, ctrl : Controlled \right] \\[.5ex]
SystemState &\defs [ rw : RealWorld, tis : IDStation ]
\end{align*}
We define five state space types that describe the TIS state, the controlled variables , monitored variables,
real-world, and the entire system, respectively. The controlled variables include the physical latch, the alarm, the
display, and the screen. The monitored variables correspond to time ($now$), the door ($door$), the fingerprint reader
($finger$), the tokens, and the peripherals. \textit{RealWorld} combines the physical variables, and
\textit{SystemState} composes the physical world and TIS.

Variable \textit{currentUserToken} represents the last token presented to the TIS, and \textit{userTokenPresence}
indicates whether a token is currently presented. The variable \textit{status} is used to record the state the TIS is
in, and can take the values indicated in the state bubbles of Figure~\ref{fig:tis-states}. Variable \textit{issuerKey}
is a partial function representing the public key chain, which is needed to authorise user entry.

We now specify a selection of the operations over this state space:
\begin{align*}
  BioCheckRequired &\defs 
  \begin{array}{l}
  \left(\begin{array}{l}
    status = \mv{gotUserToken} \land userTokenPresence = \mv{present} \\
    \land UserTokenOK \land (\neg UserTokenWithOKAuthCert)
  \end{array}\right) \\[1ex]
  \longrightarrow status := \mv{waitingFinger}\!\relsemi currentDisplay := \mv{insertFinger}
  \end{array} \\
  ReadFingerOK &\defs
  \begin{array}{l}
  \left(\begin{array}{l}
    status = \mv{waitingFinger} \land fingerPresence = \mv{present} \\
    \land userTokenPresence = \mv{present}
  \end{array}\right) \\
  \longrightarrow status := \mv{gotFinger} \relsemi currentDisplay := \mv{wait}
  \end{array} \\
  UnlockDoorOK &\defs
  \begin{array}{l}
  \left(\begin{array}{l}
    status = \mv{waitingRemoveTokenSuccess} \\
    \land userTokenPresence = \mv{absent}
  \end{array}\right) \\
  \longrightarrow \begin{array}{l} UnlockDoor \relsemi status := \mv{quiescent} \relsemi \\ currentDisplay := \mv{doorUnlocked} \end{array}
  \end{array}
\end{align*}
Each operation is guarded by execution conditions and consists of several assignments. \textit{BioCheckRequired}
requires that the current state is $\mv{gotUserToken}$, the user token is $\mv{present}$, and sufficient for entry
($UserTokenOK$), but there is no authorisation certificate ($\neg UserTokenWithOKAuthCert$).  The latter two predicates
essentially require that (1) the three certificates can be verified against the public key store, and (2) additionally
there is a valid authorisation certificate present. Their definitions can be found
elsewhere~\cite{TIS-FormalSpec}. \textit{BioCheckRequired} updates the state to $\mv{waitingFinger}$ and the display
with an instruction to provide a fingerprint. \textit{UnlockDoorOK} requires that the current state is
$\mv{waitingRemoveTokenSuccess}$, and the token has been removed. It unlocks the door, using the elided operation
\textit{UnlockDoor}, returns the status to $\mv{quiescent}$, and updates the display.

These operations act only on the TIS state space. During their execution monitored variables can also change, to reflect
real-world updates. Mostly these changes are arbitrary, with the exception that time must increase monotonically. We
therefore promote the operations to \textit{SystemState} with the following schema.
$$UEC(Op) \defs \uframe{tis}{Op} \relsemi \uframe{rw}{\lns{mon}{now} \le \lns{mon}{now'} \land ctrl' = ctrl}$$
In Z, this functionality is provided by schema \textit{UserEntryContext}~\cite{TIS-FormalSpec}, from which we derive the
name \textit{UEC}. It promotes $Op$ to act on $tis$, and composes this with a relation that specifies changes to the
real-world variables ($rw$). We specify this as a UTP relational predicate. The behaviour of all monitored variables
other than $now$ is arbitrary, and all controlled variables are unchanged. Then, we promote each operation, for example
$TISReadTokenOK \defs UEC(ReadTokenOK)$. The overall behaviour of the entry operations is given below:
$$
 TISUserEntryOp \defs \left(
 \begin{array}{l}
   TISReadUserToken \intchoice TISValidateUserToken \\
   \intchoice TISReadFinger \intchoice TISValidateFinger \\
   \intchoice TISUnlockDoor \intchoice TISCompleteFailedAccess \intchoice \cdots
 \end{array}\right)
$$
In each iteration of the state machine, we non-deterministically select an enabled operation and execute it. We also
update the controlled variables, which is done by composition with the following update operation.
\begin{align*}
  TISUpdate \defs~& \uframe{rw}{\lns{mon}{now} \le \lns{mon}{now'}} \relsemi \lns{rw}{\lns{ctrl}{latch}} := \lns{tis}{currentLatch} \relsemi  \\
            & \lns{rw}{\lns{ctrl}{display}} := \lns{tis}{currentDisplay}
\end{align*}
We also formalise the TIS state invariants necessary to prove SFR1:
  \begin{align*}
    Inv_1 &\defs 
      \begin{array}{l}
      status \in 
            \left\{\begin{array}{l}
                     \mv{gotFinger}, \mv{waitingFinger}, \mv{waitingUpdateToken} \\
                     \mv{waitingEntry}, \mv{waitingUpdateTokenSuccess}
                   \end{array}\right\} \\
      \implies (UserTokenWithOKAuthCert \lor UserTokenOK)
      \end{array} \\
    Inv_2 &\defs 
      \begin{array}{l}
      status \in \{\mv{waitingEntry}, \mv{waitingUpdateTokenSuccess}\} \\ 
        \implies (UserTokenWithOKAuthCert \lor FingerOK)
      \end{array} \\
    \textit{TIS-inv} &\defs Inv_1 \land Inv_2 \land \cdots 
  \end{align*}
%
\noindent $Inv_1$ states that whenever the TIS is in a state beyond $\mv{gotUserToken}$, then either a valid
authorisation certificate is present, or else the user token is valid. $Inv_2$ states that whenever in state
$\mv{waitingEntry}$ or $\mv{waitingUpdateTokenSuccess}$, then either an authorisation certificate or a valid finger
print is present. We elide the additional invariants that deal with the alarm and audit data~\cite{TIS-FormalSpec}.

Next, we show that each operation preserves \textit{TIS-inv} using Hoare logic.

\begin{theorem} $\hoaretriple{\textit{TIS-inv}}{TISUserEntryOp}{\textit{TIS-inv}}$
\end{theorem}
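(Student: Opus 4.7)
The plan is to prove the Hoare triple by structural decomposition of $TISUserEntryOp$. Since $TISUserEntryOp$ is a non-deterministic choice over the promoted operations, the standard rule for $\intchoice$ reduces the goal to $\hoaretriple{\textit{TIS-inv}}{TISOp}{\textit{TIS-inv}}$ for each branch individually ($TISReadUserToken$, $TISValidateUserToken$, $\ldots$, $TISUnlockDoor$, $TISCompleteFailedAccess$, etc.). For each branch I would then decompose through the promotion schema $UEC(Op) \defs \uframe{tis}{Op} \relsemi \uframe{rw}{\ldots}$: the second framed block only touches $rw$, and since $\textit{TIS-inv}$ depends solely on the $tis$ component, its preservation across that step is immediate by the frame rule. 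The remaining obligation is $\hoaretriple{\textit{TIS-inv}}{\uframe{tis}{Op}}{\textit{TIS-inv}}$ for the underlying guarded command, which I would further reduce using the Hoare rules for guarded commands, sequential composition, and assignment.

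For each individual operation a weakest-precondition calculation reduces the Hoare triple to a first-order verification condition. For example, for $BioCheckRequired$ one must show that the guard ($status = \mv{gotUserToken}$, user token present, $UserTokenOK$, $\neg UserTokenWithOKAuthCert$) together with $\textit{TIS-inv}$ implies that the assignment of $\mv{waitingFinger}$ to $status$ and the update of $currentDisplay$ still leave every conjunct of $\textit{TIS-inv}$ valid; this holds because $\mv{waitingFinger}$ lies in the set controlling $Inv_1$, $UserTokenOK$ is unchanged by the update, and $Inv_2$'s antecedent remains false. For $UnlockDoorOK$, the guard together with $Inv_1$ and $Inv_2$ supplies the disjunctions, and the subsequent assignment $status := \mv{quiescent}$ makes both invariants' antecedents vacuously true. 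The remaining branches follow the same pattern, with $ReadFingerOK$ and $ValidateFingerOK$ requiring a little more care as they are the steps that first establish the $FingerOK$ alternative of $Inv_2$.

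The main obstacle is the bookkeeping around the auxiliary predicates $UserTokenOK$, $UserTokenWithOKAuthCert$, and $FingerOK$, which expand into complex Boolean expressions over the certificate and key-chain variables. Because operations are expressed in GCL, each of these predicates is only perturbed by operations that explicitly reassign the corresponding certificate or $finger$ components, and in those cases the guard must be strong enough to (re)establish the relevant predicate; the proof essentially hinges on checking that every transition into a $status$ value mentioned in $Inv_1$ or $Inv_2$ is guarded by the appropriate disjunct. In Isabelle/UTP I would execute this plan with a Hoare-logic tactic that unfolds $\intchoice$, applies the frame and promotion lemmas, and emits a per-operation verification condition; these should be discharged by \textsf{auto} (or \textsf{blast}) after unfolding the predicate definitions and invariants. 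The elided alarm and audit invariants should factor through similarly, since no entry-phase operation modifies the alarm or audit portion of the $tis$ state.
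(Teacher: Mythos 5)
Your proposal is correct and matches the paper's approach: the paper discharges this theorem entirely automatically with the Isabelle/UTP tactic \textsf{hoare-auto}, which performs exactly the decomposition you describe (case split over the non-deterministic choice, frame/promotion reasoning, and per-operation verification conditions discharged by automation). Your write-up is essentially a manual unpacking of what that tactic does internally.
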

\begin{proof} Automatic, by application of Isabelle/UTP proof tactic \textsf{hoare-auto}. \end{proof}

\noindent This theorem shows that the state machine never violates the invariants, and we can assume they hold to
satisfy any requirements. We use this to formalise and prove SFR1, which requires that we determine all states under
which the latch will be unlocked. We can determine these states by application of the weakest precondition
calculus~\cite{Dijkstra75}. Specifically, we characterise the weakest precondition under which execution of
\textit{TISUserEntryOp} followed by \textit{TISUpdate} leads to a state satisfying
$\lns{rw}{\lns{ctrl}{latch}} = \mv{unlocked}$. We formalise this in the theorem below.

\begin{theorem}[FSFR1 is satisfied]
  \begin{align*}
    & \left(\begin{array}{l}\textit{TIS-inv} \land \lns{tis}{currentLatch} = \mv{locked} \\
      \land (TISEntryOp \relsemi TISUpdate)\mathop{\,\ckey{wp}\,}(\lns{rw}{\lns{ctrl}{latch}} = \mv{unlocked})
      \end{array} \right) \\
    & \quad \implies ((\textit{UserTokenOK} \land \textit{FingerOK}) \lor \textit{UserTokenWithOKAuthCert})
  \end{align*}
\end{theorem}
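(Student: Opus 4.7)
The plan is to compute the weakest precondition inside-out, to import \textit{TIS-inv} into the analysis via the preceding invariant-preservation theorem, and then to do a case split over the guarded branches of \textit{TISEntryOp} in order to isolate the only branch capable of producing an unlocked latch.

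First, I would unfold \textit{TISUpdate}: its last two actions are $\lns{rw}{\lns{ctrl}{latch}} := \lns{tis}{currentLatch}$ and $\lns{rw}{\lns{ctrl}{display}} := \lns{tis}{currentDisplay}$, while the preceding frame on $rw$ touches neither controlled variable. Hence $\textit{TISUpdate}\mathop{\,\ckey{wp}\,}(\lns{rw}{\lns{ctrl}{latch}} = \mv{unlocked})$ collapses to $\lns{tis}{currentLatch} = \mv{unlocked}$, so the composite precondition reduces to the wp of \textit{TISEntryOp} with respect to this simpler goal.

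Next, I would distribute wp over the nondeterministic choice defining \textit{TISEntryOp}, yielding a conjunction indexed by the promoted operations $\textit{UEC}(Op_i)$. Since \textit{UEC} only reframes $Op_i$ onto $tis$ and leaves $ctrl$ unchanged, every branch whose body does not assign $\mv{unlocked}$ to $currentLatch$ preserves the starting value $\mv{locked}$; combined with the hypothesis $\lns{tis}{currentLatch} = \mv{locked}$, the wp contribution of any such branch forces its guard to be false. This singles out \textit{UnlockDoorOK} as the only branch contributing genuinely to the wp, and its guard requires $status = \mv{waitingRemoveTokenSuccess}$ and $userTokenPresence = \mv{absent}$. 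A companion invariant for $\mv{waitingRemoveTokenSuccess}$ among those elided in \textit{TIS-inv} (mirroring $Inv_1$ and $Inv_2$, which already cover the immediately preceding waiting states) then delivers the desired disjunction $(\textit{UserTokenOK} \land \textit{FingerOK}) \lor \textit{UserTokenWithOKAuthCert}$.

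The hard part will be discharging the wp calculation cleanly through the framed, guarded, nondeterministic choice: the framing operator $\uframe{\cdot}{\cdot}$ and the promotion from \textit{IDStation} to \textit{SystemState} must be pushed under wp without severing the dependency on the TIS-internal invariants, and the case split over the $Op_i$ has to be done uniformly rather than by brute enumeration. I would lean on Isabelle/UTP's wp tactic together with the \textsf{hoare-auto} tactic already used for the invariant-preservation theorem, unfolding individual operations only where the automation stalls. Once the reduction to the \textit{UnlockDoorOK} branch is obtained, what remains is a propositional consequence of \textit{TIS-inv} and should close by \textsf{auto} or \textsf{sledgehammer}.
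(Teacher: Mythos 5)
Your proposal follows essentially the same route as the paper: the paper's proof is simply ``Automatic, by application of weakest precondition and relational calculi,'' i.e.\ compute the wp of $TISEntryOp \relsemi TISUpdate$, conjoin \textit{TIS-inv} (guaranteed by the preceding Hoare-triple theorem) and $\lns{tis}{currentLatch} = \mv{locked}$, and discharge the resulting implication. Your manual unpacking --- collapsing \textit{TISUpdate} to $\lns{tis}{currentLatch} = \mv{unlocked}$, distributing wp over the choice to isolate \textit{UnlockDoorOK}, and invoking the (elided) invariant for $\mv{waitingRemoveTokenSuccess}$ in the style of $Inv_1 \land Inv_2$ --- is a faithful elaboration of what that automation does.
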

\begin{proof}
  Automatic, by application of weakest precondition and relational calculi.
\end{proof}

We calculate the weakest precondition, and conjoin this with $\textit{TIS-Inv}$, which always holds, and the predicate
$\lns{tis}{currentLatch} = \mv{locked}$ to capture behaviours when the latch was initially locked. We show that this
composite precondition implies that either a valid user token and fingerprint were present, or else a valid
authorisation certificate. We have therefore now verified a formalisation of SFR1. In the next section we place this in
the context of an assurance argument.

\section{Mechanising the Tokeener Assurance Case}
\label{sec:tokassure}
\begin{wrapfigure}{r}{0.55\linewidth}
  \vspace{-5.5ex}

  \centering
  \includegraphics[width=\linewidth]{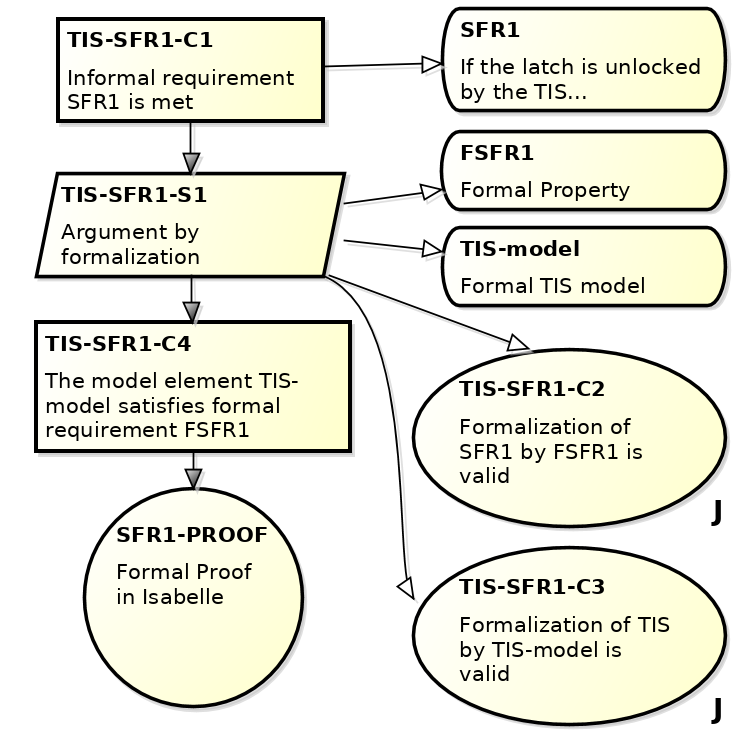}

  \vspace{-1.5ex}
  
  \caption{TIS Claim Formalization}
  
  \label{fig:formalclaimpattern}

  \vspace{-4ex}

\end{wrapfigure}  
In the following, we mechanize an assurance argument with the claim that \ac{tis} satisfies \textbf{SFR1}. The assurance
case fragment is shown in \autoref{fig:formalclaimpattern}, which is inspired by the formalisation
pattern~\cite{Denney2018}. The latter shows how results from a formal method can be employed in an assurance case. This
is contingent on the validation of both the formal model and the formal requirement. Consequently, the formalisation
pattern breaks down a requirement into 3 claims stating that (1) the formal model is validated, (2) the formal
requirement correctly characterises the informal requirement, and (3) the formal requirement is satisfied by the formal
model. The former two claims will usually have an informal process argument.

The argument in \autoref{fig:formalclaimpattern} justifies the link between the informal claim ``\ac{tis} satisfies
\textbf{SFR1}'', which is in natural language, and the formal theorem \inlineisar+FSRF1+ from \S\ref{sec:model}, which
is expressed in \ac{hol}. The top-level claim, \textbf{TIS-SFR1-C1}, states that SFR1 is satisfied, which it references
as a contextual element. This claim is decomposed by the use of the formalization strategy, \textbf{TIS-SFR1-S1}, which
has the formal property (FSRF1) and TIS model from \S\ref{sec:model} as context. The satisfaction of the formal claim is
expressed by \textbf{TIS-SFR1-C4}, and evidenced by \textbf{SFR1-PROOF}, which is the formal proof. The validation
claims are encoded as justifications \textbf{TIS-SFR1-C3} and \textbf{TIS-SFR1-C4}, which are not elaborated.

\autoref{fig:Claim_C} contains a mechanised version of the same argument in \isacm. The structure is slightly different
from the GSN diagram since justifications are particular kind of claims in SACM. The five claims are specified using the
\textcolor{Blue}{\inlineisar+CLAIM+} command, with a name and content associated. The text in these claims integrate
hyper-linked semi-formal and formal content; for example \textbf{TIS-SFR1-C1} uses the antiquotation
\inlineisar+Expression+ to insert a formal link to the defined expression for SFR1. Similarly, \textbf{TIS-SFR1-C3}
contains a reference to the resource artifact \textbf{TIS}, which is a reference to the Tokeneer specification, and also
an Isabelle constant \textbf{TIS-model} which contains the formal TIS model.

\begin{figure}[t]
  \vspace{-1ex}

  \centering
  \includegraphics[width=\linewidth]{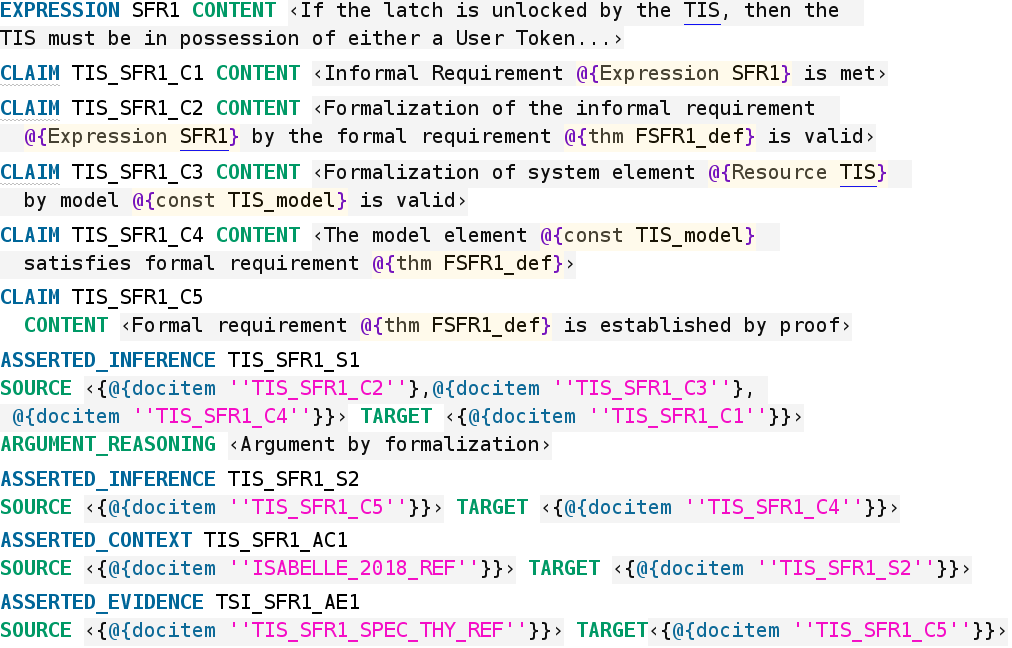}

  \vspace{-1.5ex}
  
  \caption{TIS argument: Claims and their relations in \isacm}
  
  \label{fig:Claim_C}

  \vspace{-4ex}
\end{figure}

The inferences between these claims are specified by several instances of the
\textcolor{Blue}{\inlineisar+ASSERTED_INFERENCE+} command, each of which links several premise claims to one or more
conclusions. \textbf{TIS-SFR-S1} shows that satisfaction of the informal requirement depends on the formal requirement,
and the two validation claims.

\autoref{fig:formalclaimpattern} does not elaborate further on the evidential artifacts required, for the verification,
as GSN does not support this. This is functionality which SACM supports through the artifact meta-model, which allows us
to record activities, participants, resources, and other assurance artifacts. \autoref{fig:Artifact_Rels} supplements
the argument in \autoref{fig:Claim_C} with the various evidential artifacts, and the relationships between them. For
example, the evidence supporting the claim \inlineisar+TIS_SFR1_C4+ is a reference to the artifact
\inlineisar+TIS_FSFR1_SPEC_THY+ which is the Isabelle theory containing the proof of the theorem stated by
\inlineisar+TIS_SFR1_C4+. \inlineisar+TIS_FSFR1_SPEC_THY+ refers via the artifact relationship
\inlineisar+TIS_SFR1_PROOF_ACTIVITY_REL+ to the context of the proof which is the artifact \inlineisar+Isabelle_IT+ and
to the author of the proof which is \inlineisar+Simon_Foster+. This illustrates how \isacm allows us to combine informal
artifacts and activities with the formal results they produce.

\begin{figure}[t]
  \vspace{-1ex}

  \centering
  \includegraphics[width=\linewidth]{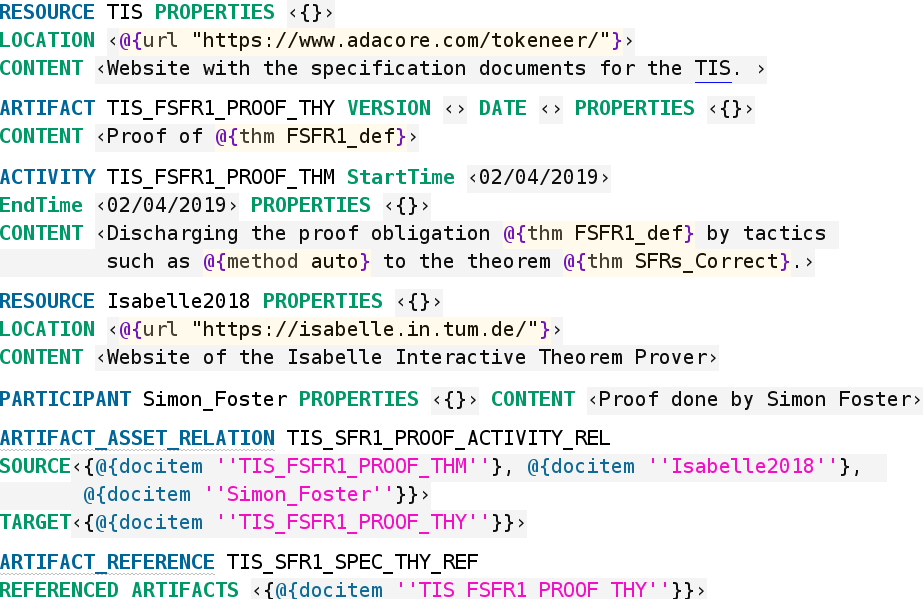}

  \vspace{-1.5ex}
  
  \caption{TIS argument: Artifacts and their relations in \isacm}
  
  \label{fig:Artifact_Rels}

  \vspace{-5ex}
\end{figure}

\section{Related Work}
\label{subsec:relatedWork}
Woodcock et al.~\cite{Woodcock2010-TokeneerExperiments} highlight defects of the Tokeneer SPARK implementation, indicate
undischarged verification conditions, and perform robustness tests generated by the Alloy SAT
solver~\cite{Jackson2000-AlloyLightweightObject} model. Using De Bono's lateral thinking, these test cases go beyond the
anticipated operational envelope and stimulate anomalous behaviours of the Tokeneer implementation.  In shortening the
feedback cycle for verification and test engineers, interactive theorem proving can help using Woodcock's approach more
intensively.

Rivera et al.~\cite{Rivera2016-Undertakingtokeneerchallenge} present an Event-B model of the TIS~(the enrolment
operations of the admin are presented), verify this model, generate Java code from it using the Rodin tool, and test
this code by JUnit tests manually derived from the specification.  The tests validate the model in addition to the
Event-B invariants derived from the same specification. The tests aim to detect errors in Event-B model caused by
misunderstandings of the specification. Using Rodin, the authors verify the security
properties~(Section~\ref{props:TIS}) using Hoare triples.

We believe that our work is the first to put formal verification effort into the wider context of structured assurance
argumentation, in our case, a machine-checked security case using
Isabelle/SACM. 

Several works bring formality to assurance
cases~\cite{Denney2013,Cruanes2013,Rushby2014,Denney2018,Diskin2018}. AdvoCATE is a tool for the construction of
GSN-based safety cases~\cite{Denney2013,Denney2018}. It uses a formal foundation called argument structures, which
prescribe well-formedness checks for the graph structure, and allow instantiation of assurance case patterns. Our work
likewise ensures well-formedness, and additionally allows the embedding of formal content. Denney's formalisation
pattern~\cite{Denney2018} is an inspiration for our work.

Rushby shows how assurance arguments can be embedded into formal logic to overcome logical
fallacies~\cite{Rushby2014}. Our framework similarly allows reasoning using formal logic, but additionally allows us to
combine formal and informal artifacts. We were also inspired by the work on evidential toolbus~\cite{Cruanes2013}, which
allows the combination of evidence from several formal and semi-formal analysis tools. Isabelle similarly allows
integration of a variety of formal analysis tools~\cite{Wenzel2007}.

\section{Conclusion}
\label{sec:conclusion}
We have presented \isacm, a framework for mechanised assurance cases. We showed how SACM is embedded into Isabelle as an
ontology, and provided an interactive assurance language that generates valid instances. We applied it to the production
of part of the Tokeneer security case, including verification of one of the security functional requirements, and
embedded these results into a mechanised assurance argument. Of a particular note, \isacm enforce the usage of the formal
ontological links which represent the relationships between the assurance arguments and their claims, a feature we
inherit from \idof. \isacm also combines features from \ihol, \idof and \sacm which results in a framework that allows
the integration of formal methods and argument-based safety assurance cases.

In future work, we will consider the integration of assurance case pattern execution~\cite{Denney2018} into our
framework, which facilitate their production. We will also complete the mechanisation of the TIS security case,
including verification of the other five SFRs. In parallel with this, we are developing our verification framework,
Isabelle/UTP~\cite{Foster19a-IsabelleUTP} to support a variety of notations used in software engineering. We recently
demonstrated formal verification facilities for a statechart-like notation~\cite{Foster18b}, and are also working
towards tools to support hybrid dynamical languages~\cite{Foster16b} like Modelica and Simulink. Our overarching goal is a
comprehensive assurance framework supported by a variety of integrated formal methods in order to approach complex
certification tasks for cyber-physical systems and autonomous robots.

\bibliographystyle{splncs03}
\bibliography{FM2019}

\begin{thebibliography}{10}
\providecommand{\url}[1]{\texttt{#1}}
\providecommand{\urlprefix}{URL }

\bibitem{Barnes2006-EngineeringTokeneerenclave}
Barnes, J., Chapman, R., Johnson, R., Widmaier, J., Cooper, D., Everett, B.:
  Engineering the tokeneer enclave protection software. In: Proceedings of IEEE
  International Symposium on Secure Software Engineering (2006),
  \url{https://www.adacore.com/uploads/technical-papers/issse2006tokeneer_altran.pdf}

\bibitem{DBLP:conf/mkm/BruckerACW18}
Brucker, A.D., A{\"{\i}}t{-}Sadoune, I., Crisafulli, P., Wolff, B.: Using the
  isabelle ontology framework - linking the formal with the informal. In:
  {CICM}. Lecture Notes in Computer Science, vol. 11006, pp. 23--38. Springer
  (2018)

\bibitem{DBLP:journals/fac/BruckerW13}
Brucker, A.D., Wolff, B.: On theorem prover-based testing. Formal Asp. Comput.
  25(5),  683--721 (2013), \url{https://doi.org/10.1007/s00165-012-0222-y}

\bibitem{CC2017-CommonCriteriaInformationPt1}
{Common Criteria Consortium}: Common criteria for information technology
  security evaluation -- part 1: Introduction and general model. Tech. Rep.
  CCMB-2017-04-001, Common Criteria Consortium (2017),
  \url{https://www.commoncriteriaportal.org/index.cfm}

\bibitem{TIS-FormalSpec}
Cooper, D., et~al.: {Tokeneer ID Station: Formal Specification}. Tech. rep.,
  Praxis High Integrity Systems (August 2008),
  \url{https://www.adacore.com/tokeneer}

\bibitem{TIS-SecurityProperties}
Cooper, D., et~al.: {Tokeneer ID Station: Security Properties}. Tech. rep.,
  Praxis High Integrity Systems (August 2008),
  \url{https://www.adacore.com/tokeneer}

\bibitem{Cruanes2013}
Cruanes, S., Hamon, G., Owre, S., Shankar, N.: Tool integration with the
  evidential tool bus. In: VMCAI. LNCS, vol. 7737. Springer (2013)

\bibitem{Denney2013}
Denney, E., Pai, G.: A formal basis for safety case patterns. In: SAFECOMP.
  LNCS, vol. 8153. Springer (2013)

\bibitem{Denney2018}
Denney, E., Pai, G.: Tool support for assurance case development. Automated
  Software Engineering  25,  435--499 (2018)

\bibitem{Dijkstra75}
Dijkstra, E.W.: Guarded commands, nondeterminacy and formal derivation of
  programs. Communications of the ACM  18(8),  453--457 (1975)

\bibitem{Diskin2018}
Diskin, Z., Maibaum, T., Wassyng, A., Wynn-Williams, S., Lawford, M.: Assurance
  via model transformations and their hierarchical refinement. In: MODELS. IEEE
  (2018)

\bibitem{Foster18b}
Foster, S., Baxter, J., Cavalcanti, A., Miyazawa, A., Woodcock, J.: Automating
  verification of state machines with reactive designs and {Isabelle/UTP}. In:
  Proc. 15th. Intl. Conf. on Formal Aspects of Component Software. LNCS, vol.
  11222. Springer (October 2018)

\bibitem{Foster16b}
Foster, S., Thiele, B., Cavalcanti, A., Woodcock, J.: Towards a {UTP} semantics
  for {Modelica}. In: UTP. pp. 44--64. LNCS 10134, Springer (2016)

\bibitem{Foster19a-IsabelleUTP}
Foster, S., Zeyda, F., Nemouchi, Y., Ribeiro, P., Wolff, B.: {Isabelle/UTP:
  Mechanised Theory Engineering for Unifying Theories of Programming}. Archive
  of Formal Proofs  (2019), \url{https://www.isa-afp.org/entries/UTP.html}

\bibitem{Foster16a}
Foster, S., Zeyda, F., Woodcock, J.: Unifying heterogeneous state-spaces with
  lenses. In: ICTAC. LNCS 9965, Springer (2016)

\bibitem{Gleirscher2018-NewOpportunitiesIntegrated}
Gleirscher, M., Foster, S., Woodcock, J.: New opportunities for integrated
  formal methods. Unpublished working paper, Department of Computer Science,
  University of York (2018), \url{https://arxiv.org/abs/1812.10103}

\bibitem{Greenwell2006}
Greenwell, W., Knight, J., Holloway, C.M., Pease, J.: A taxonomy of fallacies
  in system safety arguments. In: Proc. 24th Intl. System Safety Conference
  (July 2006)

\bibitem{Habli2010}
Habli, I., Ibarra, I., Rivett, R., Kelly, T.: Model-based assurance for
  justifying automotive functional safety. In: Proc. SAE World Congress (April
  2010)

\bibitem{Habli2014}
Habli, I., Kelly, T.: Balancing the formal and informal in safety case
  arguments. In: VeriSure: Verification and Assurance Workshop, colocated with
  Computer-Aided Verification (CAV) (July 2014)

\bibitem{Hawkins2015}
Hawkins, R., Habli, I., Kolovos, D., Paige, R., Kelly, T.: Weaving and
  assurance case from design: A model-based approach. In: Proc. 16th Intl.
  Symp. on High Assurance Systems Engineering. IEEE (2015)

\bibitem{Hoare&98}
Hoare, C.A.R., He, J.: Unifying {Theories} of {Programming}. Prentice-Hall
  (1998)

\bibitem{Jackson2000-AlloyLightweightObject}
Jackson, D.: Alloy: A lightweight object modelling notation. {ACM} Transactions
  on Software Engineering and Methodology  11(2),  256--290 (July 2000)

\bibitem{Kelly1998}
Kelly, T.: Arguing Safety -- A Systematic Approach to Safety Case Management.
  Ph.D. thesis, University of York (1998)

\bibitem{Nipkow-Paulson-Wenzel:2002}
Nipkow, T., Paulson, L.C., Wenzel, M.: Isabelle/HOL --- A Proof Assistant for
  Higher-Order Logic, LNCS, vol. 2283. Springer (2002)

\bibitem{Rivera2016-Undertakingtokeneerchallenge}
Rivera, V., Bhattacharya, S., Cata{\~{n}}o, N.: Undertaking the tokeneer
  challenge in {Event-B}. In: Proceedings of the 4th {FME} Workshop on Formal
  Methods in Software Engineering - {FormaliSE} {\textquotesingle}16. {ACM}
  Press (2016)

\bibitem{Rushby2013}
Rushby, J.: Logic and epistemology in safety cases. In: SAFECOMP. vol. LNCS
  8153. Springer (2013)

\bibitem{Rushby2014}
Rushby, J.: Mechanized support for assurance case argumentation. In: New
  Frontiers in Artificial Intelligence. LNCS, vol. 8417. Springer (2014)

\bibitem{Selviandro2018-SACM}
Selviandro, N., Kelly, T., Hawkins, R.: The visual inheritance structure to
  support the design of visual notations. In: MODELS Workshop. CEUR Workshop
  Proceedings, vol. 2245 (October 2018)

\bibitem{Wenzel2007}
Wenzel, M., Wolff, B.: Building formal method tools in the isabelle/isar
  framework. In: TPHOLs. LNCS, vol. 4732. Springer (2007)

\bibitem{Wenzel02isabelle/isar}
Wenzel, M.M., M\"unchen, T.U.: Isabelle/isar - a versatile environment for
  human-readable formal proof documents (2002)

\bibitem{Woodcock2010-TokeneerExperiments}
Woodcock, J., Aydal, E.G., Chapman, R.: The tokeneer experiments. In:
  Reflections on the Work of C.A.R. Hoare, pp. 405--430. Springer London (2010)

\end{thebibliography}

\end{document}